\begin{document}

\title{Error Exponents of Mismatched\\ Likelihood Ratio Testing}

\author{%
  \IEEEauthorblockN{Parham Boroumand}
  \IEEEauthorblockA{University of Cambridge\\
                    \tt pb702@cam.ac.uk}
  \and
  \IEEEauthorblockN{Albert ~Guill\'en i F\`abregas}
  \IEEEauthorblockA{ICREA \& Universitat Pompeu Fabra\\
                    University of Cambridge\\
                    \tt guillen@ieee.org}
                    \thanks{
This work was supported in part by the European Research Council under 
Grant 725411, and by the Spanish Ministry of Economy and Competitiveness 
under Grant TEC2016-78434-C3-1-R.
}
}

\maketitle

\begin{abstract} 

We study the problem of mismatched likelihood ratio test. We analyze the type-\RNum{1} and \RNum{2} error exponents when the actual distributions generating the observation are different from the distributions used in the test. We derive the worst-case error exponents when the actual distributions generating the data are within a relative entropy ball of the test distributions. In addition, we study the sensitivity of the test for small relative entropy balls.
\end{abstract}

\section{Introduction and Prelimenaries}

Consider the binary hypothesis testing problem \cite{Lehmann} where an observation $\xv=(x_1,\dotsc,x_n)$ is generated from two possible distributions $P^n_1$ and $P^n_2$ defined on the probability simplex $\Pc(\Xc^n)$. We assume that $P^n_1$ and $P^n_2$ are product distributions, i.e., $P_1^n(\xv)=\prod_{i=1}^n P_1(x_i)$, and similarly for $P_2^n$. For simplicity, we assume that both $P_1(x)>0$ and $P_2(x)>0$ for each $x\in\Xc$. 

Let  $\phi: \mathcal{X}^n \rightarrow \{1,2\}$ be a hypothesis test that decides which distribution generated the observation $\xv$. We consider deterministic tests $\phi$ that decide in favor of $P_1^n$ if $\xv\in \Ac_1$, where $\Ac_1\subset \Xc^n$ is the decision region for the first hypothesis. We define $\Ac_2=\Xc^n \setminus \Ac_1$ to be the decision region for the second hypothesis. The test performance is measured by the two possible pairwise error probabilities. The type-\RNum{1} and type-\RNum{2} error probabilities are defined as
\begin{equation}\label{eq:e1}
\epsilon_1 (\phi)= \sum_{\bx \in \Ac_2} P_1^n(\bx),~~~~ \epsilon_2 (\phi)= \sum_{\bx \in \Ac_1} P_2^n(\bx).  
\end{equation}
A hypothesis test is said to be optimal whenever it achieves the optimal error probability tradeoff given by
\begin{equation}\label{eq:trade}
\alpha_\beta = \min_{\phi: \epsilon_2 (\phi) \leq \beta} \epsilon_1 (\phi) .
\end{equation}
The likelihood ratio test defined as
\begin{equation}
\phi_\gamma(\boldsymbol{x})= \mathbbm{1} \bigg \{ \frac{P_2^n(\bx)}{P_1^n(\bx)}  \geq e^{n\gamma} \bigg\}+1.
\end{equation}
was shown in  \cite{Neyman} to attain the optimal tradeoff \eqref{eq:trade} for every $\gamma$. The type of a sequence $\bx= (x_1,\ldots,x_n)$ is $\Th(a)=\frac{N(a|\bx)}{n}$, where $N(a|\bx)$ is the number of occurrences of the symbol $a\in\Xc$ in the string. The likelihood ratio test can also be expressed as a function of the type of the observation $\Th$ as \cite{Cover}
\begin{align}\label{eq:LRTtype}
\phi_{\gamma}(\Th)= \mathbbm{1} \big\{ D(\Th\|P_1)-D(\Th\|P_2)    \geq  \gamma \big\}+1.
\end{align}
where $D(P\|Q)= \sum_{\mathcal{X}} P(x) \log \frac{P(x)}{Q(x)}$ is the relative entropy between distributions $P$ and $Q$.

In this paper, we are interested in the asymptotic exponential decay of the pairwise error probabilities. Therefore, it is sufficient to consider deterministic tests The optimal error exponent tradeoff $(E_1,E_2)$ is defined as
\begin{align}\label{eq:tradefix}
E_2(E_1) \triangleq \sup \big\{E_2\in \mathbb{R}_{+}: \exists \phi , \exists n_0 \in \mathbb{Z}_+  \  \text{s.t.} \  \forall   n>n_0 \nonumber \\
  \epsilon_1(\phi) \leq e^{-nE_1}  \quad \text{and} \quad  \epsilon_2(\phi) \leq e^{-nE_2}\big\}.
\end{align}
By using the Sanov's Theorem \cite{Cover,Dembo}, the optimal error exponent tradeoff $(E_1,E_2)$, attained by the likelihood ratio test, can be shown to be \cite{Blahut,Hoeffding}
\begin{align}
E_1(\phi_{\gamma})=\min_{Q \in \mathcal{Q}_1(\gamma)} D(Q\|P_1)\label{eq:min1},\\
E_2(\phi_{\gamma})=\min_{Q \in \mathcal{Q}_2(\gamma)} D(Q\|P_2)\label{eq:min2},
\end{align}
where 
\begin{align}
\mathcal{Q}_1(\gamma)&=  \big\{Q\in \mathcal{P}(\mathcal{X}): D(Q\| P_1)-D(Q\|P_2) \geq  \gamma  \big\},\\
\mathcal{Q}_2(\gamma)&= \big\{Q\in \mathcal{P}(\mathcal{X}): D(Q\| P_1)-D(Q\|P_2) \leq  \gamma    \big\}.
\end{align}
The minimizing distribution in \eqref{eq:min1}, \eqref{eq:min2} is the tilted distribution
\begin{equation}\label{eq:tilted}
Q_{\lambda}(x)= \frac{ P_{1}^{1-\lambda}(x) P_{2}^{\lambda}(x) } {\sum_{a \in \mathcal{X} }  P_{1}^{1-\lambda}(a) P_{2}^{\lambda}(a) }, ~~~0\leq\lambda \leq 1
\end{equation}
whenever $\gamma$ satisfies $-D(P_1\|P_2) \leq \gamma  \leq D(P_2\|P_1)$.
In this case, $\lambda$ is the solution of
\begin{equation}\label{eq:KKTgamma} 
D(Q_{\lambda}\| P_1)-D(Q_{\lambda} \| P_2) = \gamma.
\end{equation}
Instead, if $\gamma<-D(P_1\|P_2)$, the optimal distribution in \eqref{eq:min1} is $Q_\lambda(x)= P_1(x)$ and $E_1(\phi_{\gamma})=0$, and if $\gamma>D(P_2\|P_1)$, the optimal distribution in \eqref{eq:min2} is $Q_\lambda(x)= P_2(x)$ and $E_2(\phi_{\gamma})=0$. 

Equivalently, the dual expressions of \eqref{eq:min1} and \eqref{eq:min2} can be derived by substituting the minimizing distribution \eqref{eq:tilted} into the Lagrangian yielding \cite{Blahut,Dembo} 
\begin{align}
E_1(\phi_{\gamma})&=\max_{\lambda \geq 0 } \lambda \gamma - \log \Big (  \sum_{x\in \mathcal{X}} P_1^{1-\lambda}(x) P_2^{\lambda}(x) \Big ),  \\
E_2(\phi_{\gamma})&=\max_{\lambda \geq 0 } -\lambda \gamma - \log \Big (  \sum_{x\in \mathcal{X}} P_1^{\lambda}(x) P_2^{1-\lambda}(x)  \Big ).  
\end{align}

The Stein regime is defined as the highest error exponent  under one hypothesis  when the  error probability under the other hypothesis is at most some fixed $ \epsilon \in (0,\frac{1}{2})$ \cite{Cover}
\begin{align}\label{eq:steindef}
E_2^{(\epsilon)}  \triangleq \sup \big \{E_2\in \mathbb{R}_{+}: \exists \phi , \exists n_0 \in \mathbb{Z}_+  \  \text{s.t.} \  \forall   n>n_0 \nonumber  \\
\epsilon_1 (\phi)\leq \epsilon  \quad \text{and} \quad \epsilon_2(\phi) \leq e^{-nE_2} \big \}.
\end{align} 
The optimal $E_2^{(\epsilon)}$, given by \cite{Cover}
\begin{equation}\label{eq:stein2}
E_2^{(\epsilon)} = D(P_1\|P_2),
\end{equation}
can be achieved by setting the threshold in \eqref{eq:LRTtype} to be ${\gamma} = -D(P_1\|P_2)+\frac{C_2}{\sqrt{n}}$, where $C_2$ is a constant that depends on distributions $P_1, P_2$ and $\epsilon$.

In this work, we revisit the above results in the case where the distributions used by the likelihood ratio test are not known precisely, and instead, fixed distributions $\hat P_1$ and $\hat P_2$ are used for testing. In particular, we find the error exponent tradeoff for fixed $\hat P_1$ and $\hat P_2$ and we study the worst-case tradeoff when the true distributions generating the observation are within a certain distance of the test distributions.
The literature in robust hypothesis testing is vast (see e.g., \cite{Huber,Kassam,poor2013introduction} and references therein). Robust hypothesis testing consists of designing tests that are robust to the inaccuracy of the distributions generating the observation. Instead, we study the error exponent tradeoff performance of the likelihood ratio test for fixed test distributions.

\section{Mismatched Likelihood Ratio Testing}
\label{sec:fixedHT}

Let $\hat{P}_1(x)$ and $\hat{P}_2(x)$ be the test distributions used in the likelihood ratio test with threshold $\hat{\gamma}$ given by
\begin{align}\label{eq:LRTtypeMM}
\hat{\phi}_{\hat{\gamma}}(\Th)= \mathbbm{1} \big\{ D(\Th\|\hat{P}_1)-D(\Th\|\hat{P}_2)    \geq \hat{\gamma} \big\}+1.
\end{align}
For simplicity, we assume that both $\hat P_1(x)>0$ and $\hat P_2(x)>0$ for each $x\in\Xc$. We are interested in the achievable error exponent of the mismatched likelihood ratio test, i.e., 
\begin{align}
\hat{E}_2(\hat{E}_1) \triangleq \sup& \big\{\hat{E}_2\in \mathbb{R}_{+}: \exists \hat{\gamma}  , \exists n_0 \in \mathbb{Z} _+ \  \text{s.t.} \  \forall  n>n_0 \nonumber \\
&\epsilon_1(\hat{\phi}_{\hat{\gamma}}) \leq e^{-n\hat{E}_1}  ~ \text{and} ~  \epsilon_2(\hat{\phi}_{\hat{\gamma}}) \leq e^{-n\hat{E}_2}\big\}.
\label{eq:opttestmism}
\end{align}

\begin{theorem}\label{thm:mismatchLRT}
		For fixed $\hat{P}_1, \hat{P}_2 \in \mathcal{P}(X)$ the optimal error exponent tradeoff in \eqref{eq:opttestmism} is given by		
	\begin{align}
	\hat{E}_1(\hat{\phi}_{\hat{\gamma}})&=\min_{Q \in \mathcal{\hat{Q}}_1 (\hat{\gamma})} D(Q\|P_1) \label{eq:LRTmis1}\\
	\hat{E}_2(\hat{\phi}_{\hat{\gamma}})&=\min_{Q \in \mathcal{\hat{Q}}_2(\hat{\gamma})} D(Q\|P_2)\label{eq:LRTmis2}
	\end{align}	
	where 
	\begin{align}
	\mathcal{\hat{Q}}_1(\hat{\gamma})&=  \big\{Q\in \mathcal{P}(\mathcal{X}): D(Q\| \hat{P}_1)-D(Q\|\hat{P}_2) \geq \hat{\gamma} \big \}, \label{eq:qhat1}\\
	\mathcal{\hat{Q}}_2(\hat{\gamma})&= \big\{Q\in \mathcal{P}(\mathcal{X}): D(Q\| \hat{P}_1)-D(Q\|\hat{P}_2) \leq  \hat{\gamma}   \big \}. \label{eq:qhat2}
	\end{align}
	The minimizing distributions in \eqref{eq:LRTmis1} and \eqref{eq:LRTmis2} are 
		\begin{equation}\label{eq:tiltedMM1}
\hat{Q}_{\lambda_1}(x)= \frac{ P_1(x)  \hat{P}_{1}^{-\lambda_1}(x) \hat{P}_{2}^{\lambda_1}(x) } {\sum_{a \in \mathcal{X} } P_1 (a) \hat{P}_{1}^{-\lambda_1}(a) \hat{P}_{2}^{\lambda_1}(a) },~~\lambda_1\geq0,
	\end{equation}
	\begin{equation}\label{eq:tiltedMM2}
	\hat{Q}_{\lambda_2}(x)= \frac{ P_2(x)  \hat{P}_{2}^{-\lambda_2}(x) \hat{P}_{1}^{\lambda_2}(x) } {\sum_{a \in \mathcal{X} } P_2 (a) \hat{P}_{2}^{-\lambda_2}(a) \hat{P}_{1}^{\lambda_2}(a) },~~\lambda_2\geq0
	\end{equation}
	respectively, where  $  \lambda_1$ is chosen so that	
	\begin{equation}\label{eq:KKTgamma1} 
	D(\hat{Q}_{\lambda_1}\|\hat{P}_1)-D(\hat{Q}_{\lambda_1} \| \hat{P}_2) = \hat{\gamma},
	\end{equation}
	whenever  
	$D(P_1\|\hat{P}_1)- D(P_1\|\hat{P}_2) \leq \hat{\gamma}$,
	    and otherwise, $\hat{Q}_{\lambda_1}(x)=P_1(x)$ and  $\hat{E}_1(\hat{\phi}_{\hat{\gamma}})=0$. Similarly,  $ \lambda_2 \geq 0$ is chosen so that
   \begin{equation}\label{eq:KKTgamma2} 
   D(\hat{Q}_{\lambda_2}\|\hat{P}_1)-D(\hat{Q}_{\lambda_2} \| \hat{P}_2) = \hat{\gamma},
   \end{equation}
   whenever
$   D(P_2 \|\hat{P}_1) -D(P_2 \|\hat{P}_2) \geq\hat{\gamma},$    and otherwise, $\hat{Q}_{\lambda_2}(x)=P_2(x)$ and  $\hat{E}_2(\hat{\phi}_{\hat{\gamma}})=0$. Furthermore,  the dual expressions for the type-\RNum{1} and type-\RNum{2} error exponents  are
\begin{align}\label{eq:dual}
	\hat{E}_1(\hat{\phi}_{\hat{\gamma}})&=\max_{\lambda \geq 0 } \lambda \hat{\gamma} - \log \Big (  \sum_{x\in \mathcal{X}} P_1(x) \hat{P}_1^{-\lambda}(x) P_2^{\lambda}(x) \Big ),  \\
	\hat{E}_2(\hat{\phi}_{\hat{\gamma}})&=\max_{\lambda \geq 0 } -\lambda \hat{\gamma} - \log \Big (  \sum_{x\in \mathcal{X}} P_1^{\lambda}(x) P_2(x) \hat{P}_2^{-\lambda}(x)  \Big ).  
	\end{align} 
\end{theorem}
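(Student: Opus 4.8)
The plan is to compute the exact exponential decay rate of each error probability for a \emph{fixed} threshold $\hat{\gamma}$ and then obtain the tradeoff by letting $\hat{\gamma}$ range over $\mathbb{R}$. Because the mismatched test \eqref{eq:LRTtypeMM} depends on $\bx$ only through its type $\Th$, its decision region for hypothesis~$2$ consists of exactly those sequences whose type lies in $\mathcal{\hat{Q}}_1(\hat{\gamma})$ of \eqref{eq:qhat1}, and its decision region for hypothesis~$1$ of those with type in $\mathcal{\hat{Q}}_2(\hat{\gamma})$ of \eqref{eq:qhat2}. Substituting into \eqref{eq:e1} gives
\begin{align*}
\epsilon_1(\hat{\phi}_{\hat{\gamma}}) &= P_1^n\big(\{\bx : \Th \in \mathcal{\hat{Q}}_1(\hat{\gamma})\}\big), \\
\epsilon_2(\hat{\phi}_{\hat{\gamma}}) &= P_2^n\big(\{\bx : \Th \in \mathcal{\hat{Q}}_2(\hat{\gamma})\}\big),
\end{align*}
so each error probability is the probability that the empirical type of an i.i.d.\ string falls in a fixed subset of the simplex. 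Sanov's theorem \cite{Cover,Dembo} then identifies the exponents as the minimum relative entropy from the generating distribution to that subset, which is precisely \eqref{eq:LRTmis1}--\eqref{eq:LRTmis2}, once one checks that the large-deviations upper and lower bounds coincide on these subsets.

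The crucial structural observation making everything explicit is that the functional defining $\mathcal{\hat{Q}}_1$ and $\mathcal{\hat{Q}}_2$ is \emph{linear} in $Q$:
\begin{equation*}
D(Q\|\hat{P}_1) - D(Q\|\hat{P}_2) = \sum_{x\in\mathcal{X}} Q(x)\log\frac{\hat{P}_2(x)}{\hat{P}_1(x)}.
\end{equation*}
Hence $\mathcal{\hat{Q}}_1(\hat{\gamma})$ and $\mathcal{\hat{Q}}_2(\hat{\gamma})$ are each the intersection of the probability simplex with a closed half-space. Since $\hat{P}_1,\hat{P}_2>0$, these sets are closed, convex, and equal to the closure of their interiors, so the Sanov upper and lower exponents match and equal the minimizations in \eqref{eq:LRTmis1}--\eqref{eq:LRTmis2}. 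The theorem thus reduces to the convex program of minimizing the convex map $Q\mapsto D(Q\|P_1)$ (respectively $D(Q\|P_2)$) subject to a single linear inequality together with the simplex constraints.

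To solve the type-\RNum{1} program I would form the Lagrangian with multiplier $\lambda\ge0$ for the half-space constraint and $\nu$ for normalization; setting the derivative with respect to $Q(x)$ to zero makes $\log\frac{Q(x)}{P_1(x)}$ affine in $\log\frac{\hat{P}_2(x)}{\hat{P}_1(x)}$, and normalizing yields exactly the tilted law $\hat{Q}_{\lambda_1}$ of \eqref{eq:tiltedMM1}. Complementary slackness fixes $\lambda_1$: if $P_1$ lies strictly inside $\mathcal{\hat{Q}}_1(\hat{\gamma})$, i.e.\ $D(P_1\|\hat{P}_1)-D(P_1\|\hat{P}_2) > \hat{\gamma}$, the constraint is inactive, the unconstrained minimizer $Q=P_1$ is feasible, and $\hat{E}_1=0$; otherwise the constraint binds and $\lambda_1$ is the root of \eqref{eq:KKTgamma1}. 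The type-\RNum{2} program is entirely symmetric, exchanging the roles of $P_1,\hat{P}_1$ with $P_2,\hat{P}_2$ and flipping the inequality, and produces \eqref{eq:tiltedMM2} together with \eqref{eq:KKTgamma2}. Finally, substituting the tilted minimizer back collapses the Lagrangian into the cumulant-generating function of the corresponding tilted family, and strong duality (Slater's condition holds whenever the half-space meets the open simplex) turns the primal minimizations into the one-dimensional maximizations over $\lambda\ge0$ recorded in \eqref{eq:dual}.

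The main obstacle I anticipate is the careful invocation of Sanov's theorem: one must verify that the minimizing distribution of the rate function actually lies in $\mathcal{\hat{Q}}_1(\hat{\gamma})$ (not merely its closure) and that the upper and lower bounds agree, and one must separately handle the degenerate threshold ranges where the feasible region is empty or where $P_i$ sits on the constraint boundary, so that $\lambda_i=0$ and the exponent vanishes. The linearity of $D(Q\|\hat{P}_1)-D(Q\|\hat{P}_2)$ in $Q$ is exactly what removes these difficulties, since it simultaneously guarantees convexity of the constraint set and forces the optimizer into the stated single-parameter exponential family.
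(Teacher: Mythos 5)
Your proposal is correct and follows essentially the same route as the paper: express each error probability as the probability that the empirical type falls in $\mathcal{\hat{Q}}_i(\hat{\gamma})$, apply Sanov's theorem, and solve the resulting convex program via the Lagrangian/KKT conditions with complementary slackness, recovering the tilted minimizers and the dual by substitution. Your added care about the Sanov lower bound (the constraint set being the closure of its interior) and the explicit appeal to Slater's condition are refinements the paper leaves implicit, not a different argument.
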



\begin{remark}
For mismatched likelihood ratio testing, the optimizing distributions $\hat{Q}_{\lambda_1}, \hat{Q}_{\lambda_2}$ can be different, since the decision regions only depend on the mismatched distributions. However, if $\hat{P}_1, \hat{P}_2$ are tilted with respect to $P_1$ and $P_2$, then both $\hat{Q}_{\lambda_1}, \hat{Q}_{\lambda_2}$ are also tilted respect to $P_1$ and $P_2$. This implies the result in  \cite{Unnikrishnan}, where for any set of mismatched distributions $\hat{P}_1, \hat{P}_2$ that  are tilted with respect to generating distributions, the mismatched likelihood ratio test achieves the optimal error exponent tradeoff in \eqref{eq:tradefix}.
\end{remark}

\begin{theorem}\label{thm:stein}
	
	In the Stein regime, the mismatched likelihood ratio test achieves 
	\begin{equation}\label{eq:steinMM2}
	\hat{E}_2^{(\epsilon)}=\min_{Q \in \mathcal{\hat{Q}}_2(\hat{\gamma})} D(Q\|P_2),
	\end{equation}
	with threshold  
	\begin{equation}\label{eq:steinthresh2}
	    \hat{\gamma}=D(P_1\|\hat{P}_1) -D(P_1\|\hat{P}_2) +\frac{\hat{C}_2}{\sqrt{n}},
	\end{equation}
 and	$\hat{C}_2$ is a constant that depends on distributions $P_1,\hat{P}_1,\hat{P}_2$, and $\epsilon$.
\end{theorem}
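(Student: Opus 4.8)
The plan is to reduce the mismatched decision statistic to an empirical mean of i.i.d.\ bounded random variables and then split the analysis into a central-limit statement (type-\RNum{1} error pinned at the constant level $\epsilon$) and a large-deviations statement (type-\RNum{2} exponent). First I would observe that the statistic driving the test \eqref{eq:LRTtypeMM} telescopes,
\[
D(\Th\|\hat P_1) - D(\Th\|\hat P_2) = \sum_{x\in\Xc}\Th(x)\log\frac{\hat P_2(x)}{\hat P_1(x)} = \frac1n\sum_{i=1}^n Z_i, \qquad Z_i \triangleq \log\frac{\hat P_2(x_i)}{\hat P_1(x_i)},
\]
so the test compares the empirical mean $L_n \triangleq \frac1n\sum_i Z_i$ against $\hat\gamma$. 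Because $\Xc$ is finite and all four distributions are strictly positive, the $Z_i$ are bounded and hence have finite mean and variance under both $P_1$ and $P_2$. A direct computation gives $\E_{P_1}[Z_i] = D(P_1\|\hat P_1) - D(P_1\|\hat P_2) \triangleq \mu_1$, which is exactly the leading term of the proposed threshold \eqref{eq:steinthresh2}; write $\sigma_1^2 \triangleq \var_{P_1}(Z_i)$, which is positive whenever $\hat P_1 \neq \hat P_2$.

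Next I would control the type-\RNum{1} error through the central limit theorem. With $\hat\gamma = \mu_1 + \hat C_2/\sqrt n$,
\[
\epsilon_1(\hat\phi_{\hat\gamma}) = \Pp_{P_1}\!\left(L_n \geq \hat\gamma\right) = \Pp_{P_1}\!\left(\frac{\sqrt n\,(L_n - \mu_1)}{\sigma_1} \geq \frac{\hat C_2}{\sigma_1}\right) \xrightarrow{n\to\infty} \bar\Phi\!\left(\frac{\hat C_2}{\sigma_1}\right),
\]
where $\bar\Phi$ is the standard Gaussian complementary CDF. Choosing $\hat C_2 = \sigma_1\,\bar\Phi^{-1}(\epsilon')$ for any $\epsilon' < \epsilon$ makes the limit strictly below $\epsilon$, so that $\epsilon_1(\hat\phi_{\hat\gamma})\leq\epsilon$ for all $n$ beyond some $n_0$; this exhibits the claimed dependence of $\hat C_2$ on $P_1,\hat P_1,\hat P_2$ (through $\sigma_1$) and on $\epsilon$.

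For the type-\RNum{2} exponent I would invoke Sanov's theorem exactly as in the proof of Theorem \ref{thm:mismatchLRT}. The type-\RNum{2} error is $\epsilon_2(\hat\phi_{\hat\gamma}) = \Pp_{P_2}(L_n < \hat\gamma)$, and since $\hat\gamma \to \mu_1$ the relevant constraint set converges to the one in \eqref{eq:qhat2}, giving $-\frac1n\log\epsilon_2(\hat\phi_{\hat\gamma}) \to \min_{Q\in\hat{\mathcal Q}_2(\hat\gamma)} D(Q\|P_2)$. The strong-converse direction---that no threshold does better under the constraint $\epsilon_1\leq\epsilon<\tfrac12$---follows because the weak law of large numbers forces $L_n\to\mu_1$ under $P_1$, so any threshold whose first-order part lies below $\mu_1$ sends $\epsilon_1\to1$; hence the first-order threshold is pinned at $\mu_1$ and the type-\RNum{2} exponent cannot exceed $\min_{Q\in\hat{\mathcal Q}_2(\mu_1)} D(Q\|P_2)$. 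This also explains why the exponent is independent of $\epsilon$, mirroring the matched Stein's lemma \eqref{eq:stein2}.

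The hard part is the interplay between the two scales: the threshold is tuned at the $1/\sqrt n$ (central-limit) scale to fix the type-\RNum{1} error at $\epsilon$, while the type-\RNum{2} exponent is a coarser $1/n$ (large-deviations) quantity, so I must verify that the vanishing correction $\hat C_2/\sqrt n$ does not perturb the exponent. I would settle this by a sandwiching argument: for any $\delta>0$ and all large $n$, $\mu_1-\delta \leq \hat\gamma \leq \mu_1+\delta$, so by monotonicity of the constraint sets $\hat{\mathcal Q}_2(\cdot)$ in the threshold the exponent lies between $\min_{Q\in\hat{\mathcal Q}_2(\mu_1-\delta)}D(Q\|P_2)$ and $\min_{Q\in\hat{\mathcal Q}_2(\mu_1+\delta)}D(Q\|P_2)$; letting $\delta\downarrow0$ and using continuity of $\hat\gamma\mapsto\min_{Q\in\hat{\mathcal Q}_2(\hat\gamma)}D(Q\|P_2)$, which follows from the explicit tilted-minimizer characterization \eqref{eq:tiltedMM2}--\eqref{eq:KKTgamma2} of Theorem \ref{thm:mismatchLRT}, yields the claim.
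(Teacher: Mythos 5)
Your proposal is correct and follows essentially the same route as the paper: both reduce the test statistic to the empirical mean of $\log(\hat P_2/\hat P_1)$, tune the threshold at the $1/\sqrt{n}$ scale via the central limit theorem so the type-I error is pinned at (or just below) $\epsilon$, obtain the type-II exponent from Sanov's theorem, and rule out first-order thresholds below $D(P_1\|\hat P_1)-D(P_1\|\hat P_2)$ by concentration of the empirical mean under $P_1$ (the paper uses typical sets with Hoeffding's inequality where you invoke the weak law of large numbers). Your explicit sandwiching and continuity argument for the $n$-dependent threshold is a slightly more careful treatment of a point the paper passes over quickly, but it does not change the substance of the proof.
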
 

\begin{remark}	
	Note that since $P_1$ satisfies the constraint in \eqref{eq:steinMM2} then $\hat{E}_2^{(\epsilon)} \leq {E}_2^{(\epsilon)}$. In fact, if $\hat{P}_1, \hat{P}_2$ are tilted respect to $P_1, P_2$ then this inequality is met with equality. Moreover, it is easy to find a set of data and test distributions where  $\hat{E}_2^{(\epsilon)} < {E}_2^{(\epsilon)}$. 

\end{remark}


\section{Mismatched Likelihood Ratio \\Testing with Uncertainty}

In this section, we analyze the worst-case  error exponents tradeoff when the actual distributions $P_1, P_2$ are close to the mismatched test distributions $\hat{P}_1$ and $\hat{P}_2$. More specifically,
\begin{equation}\label{eq:ball2}
P_1 \in \mathcal{B}(\hat P_1,R_1),~~ P_2 \in \mathcal{B}(\hat P_2, R_2)
\end{equation}
where the $D$-ball
\begin{equation}\label{eq:ball}
\mathcal{B}( Q,R)= \big\{P\in\Pc(\Xc): D(Q\|P) \leq R \big\}
\end{equation} 
is a ball centered at distribution $Q$ containing all distributions whose relative entropy is smaller or equal than radius $R$. This model was used in robust hypothesis testing in \cite{Levy}.  Figure \ref{fig:mismatch} depicts the mismatched probability distributions and the mismatched likelihood ratio test as a hyperplane dividing the probability space into the two decision regions. 

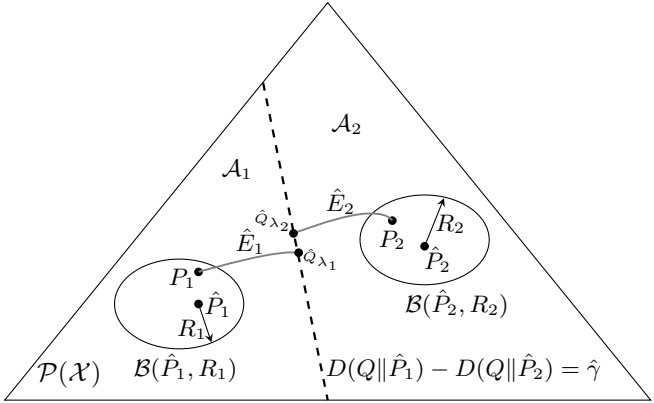
\begin{figure}[!h]
	\centering
	\begin{tikzpicture}[scale=0.85]
	\draw (5,1.2) -- (0,-5) -- (10,-5) --(5,1.2)  ;
	\draw [line width=0.3mm, dashed] (4,-0.05) --   (5.,-5)  ;
	\node at (7.1,-4.5) {\small $D(Q\|\hat{P}_1)-D(Q\|\hat{P}_2) =  \hat{\gamma}$};
	\node[draw,circle,inner sep=1pt,fill] at (3,-3.5) {};
	\node at (3.3,-3.5) {\small $ \hat{P}_1$};
	\node[draw,circle,inner sep=1pt,fill] at (6.5,-2.6) {};
	\node at (6.7,-2.8) {\small $\hat{P}_2$};
	\draw (2.7,-3.5) ellipse (1cm and 0.7cm);
	\draw (6.5,-2.5) ellipse (1 cm and 0.7 cm);
	\node at (1,-4.6) {$\mathcal{P}({\mathcal{X}})$};
	\draw [->,>=stealth] (6.5,-2.6) -- (6.8,-1.85);
	\draw [->,>=stealth] (3,-3.5) -- (3.2,-4.1);
	\node at (2.9,-3.9) {\small $R_1$};
	\node at (6.9,-2.25) {\small $R_2$};
	\node[draw,circle,inner sep=1pt,fill] at (3,-3) {};
	\node[draw,circle,inner sep=1pt,fill] at (6,-2.2) {};
	\draw [line width=0.25mm, gray]plot [smooth, tension=1] coordinates{(3,-3)  (4,-2.75) (4.55,-2.7)} ;
	\draw [line width=0.25mm, gray]plot [smooth, tension=1] coordinates{ (6,-2.2)  (5.5,-2.1)  (4.47,-2.4)} ;
    \node[draw,circle,inner sep=1pt,fill] at (4.55,-2.7) {};
    \node[draw,circle,inner sep=1pt,fill] at (4.47,-2.4) {};
   \node at (4.9,-2.8) {\tiny $\hat{Q}_{\lambda_1}$};
     \node at (4.15,-2.2) {\tiny $\hat{Q}_{\lambda_2}$};
	\node at (3.8,-2.52) {\small $\hat E_1$};
	\node at (5.2,-1.9) {\small $\hat E_2$};
    \node at (2.75,-3.1) {\small $P_1$};
	\node at (6,-2.5) {\small $P_2$};
    \node at (3.6,-1.4) {\small $\Ac_1$};
	\node at (5.3,-0.7) {\small $\Ac_2$};
	 \node at (2.8,-4.5) {\small $\mathcal{B}(\hat P_1,R_1)$};
	\node at (7,-3.5) {\small $\mathcal{B}(\hat P_2,R_2)$};
	\end{tikzpicture}
	\caption{Mismatched likelihood ratio test over distributions in $D$-balls. } 
	\label{fig:mismatch}
\end{figure}

We study the worst-case error-exponent performance of mismatched likelihood ratio testing when the distributions generating the observation fulfill \eqref{eq:ball2}. 
In particular, we are interested in the least favorable distributions $P_1^L, P_2^L$ in  $\mathcal{B}(\hat P_1,R_1), \mathcal{B}(\hat P_2,R_2)$, i.e., the distributions achieving the lowest error exponents $\hat{E}^L_1(R_1), \hat{E}^L_2(R_2)$.

\begin{theorem}\label{thm:worstmismatch}
	For every  $ R_{1}, R_2 \geq 0$ let the least favorable exponents  $\hat{E}^L_1(R_1), \hat{E}^L_2(R_2)$ defined as
	\begin{align}
\hat{E}^L_1(R_1)&=\min_{P_1 \in \mathcal{B}(\hat P_1,R_1)    } \ \ \min_{Q \in \mathcal{\hat{Q}}_1 (\hat{\gamma})} D(Q\|P_1),\label{eq:MMlower1}\\
\hat{E}^L_2(R_2)&=\min_{P_2 \in  \mathcal{B}(\hat P_2,R_2)} \ \ \min_{ Q \in \mathcal{\hat{Q}}_2 (\hat{\gamma})  } D(Q\|P_2), \label{eq:MMlower2}
	\end{align}
where $\mathcal{\hat{Q}}_1 (\hat{\gamma}), \mathcal{\hat{Q}}_2 (\hat{\gamma}) $ are defined in \eqref{eq:qhat1}, \eqref{eq:qhat2}. Then, for any distribution pair $P_1 \in\mathcal{B}(\hat P_1,R_1), P_2 \in \mathcal{B}(\hat P_2,R_2)$, the corresponding error exponent pair $(\hat{E}_1, \hat{E}_2)$  satisfies
	\begin{equation}\label{eq:LU1}
\hat{E}^L_1(R_1)  \leq		\hat{E}_1(\hat{\phi}_{\hat{\gamma}})  \text{,} \quad \hat{E}^L_2(R_2)  \leq		\hat{E}_2(\hat{\phi}_{\hat{\gamma}}).
	\end{equation}
 Furthermore, the optimization problem in \eqref{eq:MMlower1} is convex with optimizing distributions 
	\begin{align}\label{eq:lowerworstKKT1}
	{Q}^L_{\lambda_1}(x)&= \frac{ P^L_1(x) \hat{P}_{1}^{-\lambda_1}(x) \hat{P}_{2}^{\lambda_1}(x) } {\sum_{a \in \mathcal{X} } 	P^L_1(a) \hat{P}_{1}^{-\lambda_1}(a) \hat{P}_{2}^{\lambda_1}(a) },\\
	P^L_1(x)&=\beta_1 Q^L_{\lambda_1}(x) + (1-\beta_1) \hat{P}_1(x),	\label{eq:lowerworstKKT11}
	\end{align}
	where $\lambda_1 \geq 0, 0\leq \beta_1 \leq 1 $  are chosen such that	
	\begin{align}\label{eq:condballMM}
D(Q^L_{\lambda_1}\|\hat{P}_1)-D(Q^L_{\lambda_1} \| \hat{P}_2) &=\hat{ \gamma},\\
D(\hat{P}_1\| P^L_1) &= R_{1},\label{eq:condballMM2}
\end{align}
	when
	\begin{equation}\label{eq:gammaworsL}
     \max_{P_1 \in\mathcal{B}(\hat P_1,R_1) }  D(P_1\| \hat{P}_1) - D(P_1\|\hat{P}_2)  \leq \hat{\gamma}.
	\end{equation}
	Otherwise, we can find a least favorable distribution $ P^{L}_1 \in \mathcal{B}(\hat P_1,R_1)$ such that  $\hat{E}_1(\hat{\phi}_{\hat{\gamma}})$ for this distribution is $\hat{E}_1(\hat{\phi}_{\hat{\gamma}})=0$.  Similarly,  the optimization  \eqref{eq:MMlower2} is convex with optimizing distributions 		
	\begin{align}\label{eq:lowerworstKKT2}
	Q^L_{\lambda_2}(x)&= \frac{ 	P^L_2(x) \hat{P}_{2}^{-\lambda_2}(x) \hat{P}_{1}^{\lambda_2}(x) } {\sum_{a \in \mathcal{X} } 	P^L_2(a) \hat{P}_{2}^{-\lambda_2}(a) \hat{P}_{1}^{\lambda_2}(a) },\\
	P^L_2(x)&=\beta_2 Q^L_{\lambda_2}(x) + (1-\beta_2) \hat{P}_2(x),
	\end{align}
	where $ \lambda_2 \geq 0, 0\leq \beta_2 \leq 1 $  are chosen such that	
	\begin{align}
	D(Q^L_{\lambda_2}\|\hat{P}_2)-DQ^L_{\lambda_2} \| \hat{P}_1) &=\hat{ \gamma},\\
	D(\hat{P}_2\| P^L_2) &= R_{2},
	\end{align}
	whenever,	
	\begin{equation}\label{eq:gammaworstU}
	\min_{P_2 \in\mathcal{B}(\hat P_2,R_2)}  D(P_2\| \hat{P}_1) - D(P_2\|\hat{P}_2) \geq \hat{\gamma}.
	\end{equation}
	Otherwise, we can find a distribution $ P^{L}_2 \in \mathcal{B}(\hat P_2,R_2)$ such that  $\hat{E}_2(\hat{\phi}_{\hat{\gamma}})$ for this distribution is $\hat{E}_2(\hat{\phi}_{\hat{\gamma}})=0$. 
	\end{theorem}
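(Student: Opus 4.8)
The plan is to recognize that each least-favorable exponent, say $\hat{E}^L_1(R_1)$ in \eqref{eq:MMlower1}, is a single joint minimization of $D(Q\|P_1)$ over the pair $(Q,P_1)$ subject to the ball constraint $D(\hat{P}_1\|P_1)\le R_1$ and the decision-region constraint $D(Q\|\hat{P}_1)-D(Q\|\hat{P}_2)\ge\hat\gamma$ from \eqref{eq:qhat1}. The lower bound \eqref{eq:LU1} is then immediate, since $\hat{E}^L_1(R_1)=\min_{P_1\in\mathcal{B}(\hat{P}_1,R_1)}\hat{E}_1(\hat{\phi}_{\hat\gamma})\le\hat{E}_1(\hat{\phi}_{\hat\gamma})$ for any admissible $P_1$, and symmetrically for the type-II exponent.

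First I would establish convexity. The objective $D(Q\|P_1)$ is jointly convex in $(Q,P_1)$. The crucial simplification is that $D(Q\|\hat{P}_1)-D(Q\|\hat{P}_2)=\sum_{x}Q(x)\log\frac{\hat{P}_2(x)}{\hat{P}_1(x)}$ is \emph{linear} in $Q$, so $\mathcal{\hat{Q}}_1(\hat\gamma)$ is a half-space intersected with the simplex, while $D(\hat{P}_1\|P_1)$ is convex in $P_1$, so the ball is convex. Hence the program is convex, and for $R_1>0$ Slater's condition holds, making the KKT conditions necessary and sufficient.

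Next I would form the Lagrangian with multiplier $\lambda_1\ge0$ for the decision-region constraint and $\mu_1\ge0$ for the ball constraint, together with normalization multipliers. Stationarity in $Q(x)$ yields $Q(x)\propto P_1(x)\,(\hat{P}_2(x)/\hat{P}_1(x))^{\lambda_1}$, which after normalization is exactly $Q^L_{\lambda_1}$ in \eqref{eq:lowerworstKKT1}. Stationarity in $P_1(x)$ gives $-Q(x)/P_1(x)-\mu_1\hat{P}_1(x)/P_1(x)+\nu=0$, i.e. $P_1(x)=(Q(x)+\mu_1\hat{P}_1(x))/(1+\mu_1)$; setting $\beta_1=1/(1+\mu_1)\in(0,1]$ recovers the mixture \eqref{eq:lowerworstKKT11}. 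Complementary slackness forces both constraints to be active when $\lambda_1,\mu_1>0$, producing the two scalar equations \eqref{eq:condballMM} and \eqref{eq:condballMM2} that pin down $(\lambda_1,\beta_1)$. For the degenerate regime, if condition \eqref{eq:gammaworsL} fails then some $P_1\in\mathcal{B}(\hat{P}_1,R_1)$ satisfies $D(P_1\|\hat{P}_1)-D(P_1\|\hat{P}_2)>\hat\gamma$, i.e. $P_1\in\mathcal{\hat{Q}}_1(\hat\gamma)$; choosing $Q=P_1$ gives $D(Q\|P_1)=0$, so this $P_1$ is least favorable with $\hat{E}_1(\hat{\phi}_{\hat\gamma})=0$. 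The type-II statements follow by the symmetric argument, exchanging $(P_1,\hat{P}_1)\leftrightarrow(P_2,\hat{P}_2)$ and reversing the inequality.

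The main obstacle will be establishing the coupled, self-referential solution: $Q^L_{\lambda_1}$ is tilted from $P^L_1$ while $P^L_1$ is itself a mixture of $\hat{P}_1$ and $Q^L_{\lambda_1}$, so one must show the two active-constraint equations admit a consistent pair $(\lambda_1,\beta_1)$. Convexity guarantees a minimizer exists; monotonicity of $D(Q^L_{\lambda_1}\|\hat{P}_1)-D(Q^L_{\lambda_1}\|\hat{P}_2)$ in $\lambda_1$ and of $D(\hat{P}_1\|P^L_1)$ in $\beta_1$ should give uniqueness, but verifying these monotonicities through the mutually dependent definitions is the delicate step.
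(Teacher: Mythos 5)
Your proposal follows essentially the same route as the paper's proof: treat \eqref{eq:MMlower1} as a joint convex minimization over $(Q,P_1)$, derive the tilted form of $Q^L_{\lambda_1}$ and the mixture form of $P^L_1$ from stationarity of the Lagrangian (with $\beta_1=1/(1+\mu_1)$ exactly as the paper sets $\beta_1=1/(1+\lambda_1')$), and use complementary slackness to obtain the two active-constraint equations. Your handling of the degenerate case --- picking a $P_1$ in the ball that lies in $\hat{\mathcal{Q}}_1(\hat\gamma)$ and taking $Q=P_1$ to force $D(Q\|P_1)=0$ --- is actually stated more cleanly than in the paper, whose corresponding sentence has the inequality oriented the wrong way.
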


The worst-case achievable error exponents of mismatched likelihood ratio testing for data distributions in a $D$-ball are essentially the minimum relative entropy between two sets of probability distributions. Specifically, the minimum relative entropy $\Bc(\hat P_1,R_1)$ and $\hat \Qc_2(\hat \gamma)$ gives $\hat{E}_1^L(R_1)$, and similarly for $\hat{E}_2^L(R_2)$.


\section{Mismatched Likelihood Ratio \\Testing Sensitivity}

In this section, we study how the worst-case error exponents $(\hat{E}^L_1, \hat{E}^L_2)$ behave when the $D$-ball radii $R_1,R_2$ are small. In particular, we  derive a Taylor series expansion of the worst-case error exponent. This approximation can also be interpreted as the worst-case sensitivity of the test, i.e., how does the test perform when actual distributions are very close to the mismatched distributions.    

\begin{theorem}\label{thm:lowerworst}
For every 	$ R_i \geq 0$,  $\hat{P}_i \in \mathcal{P}(\mathcal{X})$ for $i=1,2$, and 
\begin{equation}\label{eq:threshcodsen}
-D(\hat{P}_1\|\hat{P}_2) \leq \hat{\gamma} \leq D(\hat{P}_2\|\hat{P}_1),
\end{equation}
  we have	
		\begin{equation}\label{eq:worstapprox}
	   \hat{E}^L_i (R_i) = E_i(\hat{\phi}_{\hat{\gamma}}) - S_i(\hat{P}_1,\hat{P}_2,\hat{\gamma}) \sqrt{R_i}+ o(\sqrt{R_i}),  
	\end{equation}
	where	
\begin{equation}\label{eq:sensitivity}
S_i^2(\hat{P}_1,\hat{P}_2,\hat{\gamma}) =2  \text{Var}_{\hat{P}_i} \Bigg(\frac{\hat{Q}_{\lambda}(X)}{\hat{P}_i(X)}  \Bigg) 
\end{equation}
and $\hat{Q}_{\lambda}(X)$ is the minimizing distribution in \eqref{eq:tilted} for test $\hat{\phi}_{\hat{\gamma}}$. 
\end{theorem}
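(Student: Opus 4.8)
The plan is to treat $\hat{E}^L_i(R_i)$ as a perturbation of the matched exponent and to extract its leading correction by a first-order (envelope) analysis of the nested minimization in Theorem~\ref{thm:worstmismatch}, exploiting the fact that a $D$-ball of radius $R_i$ is, to second order, a Euclidean ellipsoid of ``radius'' $\sqrt{2R_i}$. First I would observe that at $R_i=0$ the ball $\mathcal{B}(\hat{P}_i,0)=\{\hat{P}_i\}$ collapses to its center, so $\hat{E}^L_i(0)=\min_{Q\in\hat{\mathcal{Q}}_i(\hat{\gamma})}D(Q\|\hat{P}_i)=E_i(\hat{\phi}_{\hat{\gamma}})$, which supplies the zeroth-order term. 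The condition \eqref{eq:threshcodsen} guarantees that we are in the interior regime where the inner minimizer is the tilted distribution $\hat{Q}_\lambda$ of \eqref{eq:tilted} with $0\le\lambda\le1$, rather than a boundary case with a vanishing exponent, so all the relevant quantities depend smoothly on the parameters.

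Next I would introduce the inner value function $f(P_i)\triangleq\min_{Q\in\hat{\mathcal{Q}}_i(\hat{\gamma})}D(Q\|P_i)$ and note that the feasible set $\hat{\mathcal{Q}}_i(\hat{\gamma})$ depends only on $\hat{P}_1,\hat{P}_2$, not on $P_i$. Hence, by the envelope theorem, the gradient of $f$ at $P_i=\hat{P}_i$ is obtained by differentiating $D(\hat{Q}_\lambda\|P_i)$ with $\hat{Q}_\lambda$ held fixed, giving $\partial f/\partial P_i(x)\big|_{\hat{P}_i}=-\hat{Q}_\lambda(x)/\hat{P}_i(x)\eqqcolon g(x)$. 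Writing $P_i=\hat{P}_i+\delta$ with $\sum_x\delta(x)=0$, the constraint $D(\hat{P}_i\|P_i)\le R_i$ expands as $\tfrac12\sum_x\delta(x)^2/\hat{P}_i(x)+o(\|\delta\|^2)\le R_i$, an ellipsoid of radius $\Theta(\sqrt{R_i})$ in the $1/\hat{P}_i$-weighted norm; this is precisely the source of the non-analytic $\sqrt{R_i}$ behaviour.

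The correction term is therefore the minimum of the linear functional $\langle g,\delta\rangle$ over this ellipsoid intersected with the hyperplane $\sum_x\delta(x)=0$. After the change of variables $u(x)=\delta(x)/\sqrt{\hat{P}_i(x)}$, the problem becomes minimizing $\langle\tilde g,u\rangle$ over $\|u\|\le\sqrt{2R_i}$ subject to $u\perp\sqrt{\hat{P}_i}$, whose value is $-\sqrt{2R_i}\,\|\tilde g_\perp\|$, where $\tilde g(x)=g(x)\sqrt{\hat{P}_i(x)}$ and $\tilde g_\perp$ is its component orthogonal to the unit vector $\sqrt{\hat{P}_i}$. A direct computation gives $\|\tilde g\|^2=\E_{\hat{P}_i}\!\big[(\hat{Q}_\lambda(X)/\hat{P}_i(X))^2\big]$ and $\langle\tilde g,\sqrt{\hat{P}_i}\rangle=-\E_{\hat{P}_i}[\hat{Q}_\lambda(X)/\hat{P}_i(X)]=-1$, so that $\|\tilde g_\perp\|^2=\text{Var}_{\hat{P}_i}\!\big(\hat{Q}_\lambda(X)/\hat{P}_i(X)\big)$. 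Substituting yields $\hat{E}^L_i(R_i)=E_i(\hat{\phi}_{\hat{\gamma}})-\sqrt{2\,\text{Var}_{\hat{P}_i}(\hat{Q}_\lambda(X)/\hat{P}_i(X))}\,\sqrt{R_i}+o(\sqrt{R_i})$, which is exactly \eqref{eq:worstapprox}--\eqref{eq:sensitivity}. As a consistency check, the minimizing direction $\delta^\star\propto\hat{Q}_\lambda-\hat{P}_i$ matches the mixture form $P^L_i=\beta_iQ^L_{\lambda_i}+(1-\beta_i)\hat{P}_i$ of Theorem~\ref{thm:worstmismatch} in the limit $\beta_i\to0$.

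The main obstacle is making the remainder rigorous rather than formal. Since the optimal $\delta^\star$ has magnitude $\Theta(\sqrt{R_i})$, the linear term is $\Theta(\sqrt{R_i})$ while the neglected second-order contributions from both the objective $D(\hat{Q}_\lambda\|P_i)$ and the constraint are $O(R_i)=o(\sqrt{R_i})$; I would bound these uniformly over the (compact) set of feasible perturbations, and separately verify that the inner minimizer varies continuously so the envelope derivative holds with controlled error, e.g.\ via a two-sided sandwich that fixes $Q=\hat{Q}_\lambda$ for the upper bound and uses the optimality of $f$ for the lower bound. The remaining bookkeeping---using positivity of $\hat{P}_i$ to keep $1/\hat{P}_i(x)$ bounded, and \eqref{eq:threshcodsen} to keep $\lambda$ in the interior---is routine.
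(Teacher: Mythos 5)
Your proposal is correct and follows essentially the same route as the paper's proof: the envelope theorem gives the gradient $-\hat{Q}_{\lambda}(x)/\hat{P}_i(x)$ of the inner value function, the $D$-ball is replaced by its second-order Fisher-metric ellipsoid $\tfrac12\thetav^{T}\Jm(\hat{P}_i)\thetav\le R_i$, and the resulting linear-over-ellipsoid minimization yields the $-\sqrt{2\,\mathrm{Var}_{\hat{P}_i}(\hat{Q}_{\lambda}/\hat{P}_i)}\sqrt{R_i}$ correction. Your whitening/orthogonal-projection solution of that last step is just a cleaner closed form of the paper's KKT computation (both give the optimal direction $\propto\hat{Q}_{\lambda}-\hat{P}_i$), and your remarks on controlling the $o(\sqrt{R_i})$ remainder are if anything more careful than the paper's.
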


\begin{lemma}\label{lem:varderivative}
	For every $\hat{P}_1,\hat P_2 \in \mathcal{P}(\mathcal{X})$,  and $\hat{\gamma}$ satisfying \eqref{eq:threshcodsen}
	\begin{align}
	\frac{\partial }{\partial \hat{\gamma}}S_1(\hat{P}_1,\hat{P}_2,\hat{\gamma}) \geq 0, ~~ 	\frac{\partial }{\partial \hat{\gamma}}S_2(\hat{P}_1,\hat{P}_2,\hat{\gamma}) \leq 0.
	\end{align}
\end{lemma}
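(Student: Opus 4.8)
The plan is to reduce both inequalities to the convexity of a single cumulant generating function (CGF) of the log-likelihood ratio, and to the fact that the threshold $\hat{\gamma}$ is a monotone reparametrization of the tilting parameter $\lambda$. First I would fix notation for the $R_i\to 0$ regime: since the sensitivity is an expansion of $\hat{E}^L_i(R_i)$ around $R_i=0$, the data distributions coincide with the test distributions, so the minimizing distribution $\hat{Q}_{\lambda}$ appearing in \eqref{eq:sensitivity} is the symmetric tilt $\hat{Q}_\lambda(x)=\hat{P}_1^{1-\lambda}(x)\hat{P}_2^{\lambda}(x)/Z(\lambda)$ with $Z(s)=\sum_{x}\hat{P}_1^{1-s}(x)\hat{P}_2^{s}(x)$. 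Introducing the log-likelihood ratio $\ell(x)=\log\frac{\hat{P}_2(x)}{\hat{P}_1(x)}$ and the CGF $\psi(s)=\log Z(s)=\log\E_{\hat{P}_1}\!\left[e^{s\ell(X)}\right]$, a direct calculation shows the two sensitivities are chi-squared divergences expressible through $\psi$: one gets $\text{Var}_{\hat{P}_1}(\hat{Q}_\lambda/\hat{P}_1)=Z(2\lambda)/Z(\lambda)^2-1$, hence $S_1^2=2\bigl(e^{\psi(2\lambda)-2\psi(\lambda)}-1\bigr)$, and similarly $S_2^2=2\bigl(e^{\psi(2\lambda-1)-2\psi(\lambda)}-1\bigr)$.

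The key link is to identify how $\lambda$ depends on $\hat{\gamma}$. I would show that the defining condition \eqref{eq:KKTgamma1} evaluated at the center ($P_1=\hat{P}_1$) collapses to $\hat{\gamma}=D(\hat{Q}_\lambda\|\hat{P}_1)-D(\hat{Q}_\lambda\|\hat{P}_2)=\E_{\hat{Q}_\lambda}[\ell]=\psi'(\lambda)$. Because $\ell$ is bounded on the finite alphabet $\mathcal{X}$, $\psi$ is finite and smooth on all of $\R$ and strictly convex whenever $\hat{P}_1\neq\hat{P}_2$; thus $\psi'$ is strictly increasing and $\tfrac{d\hat{\gamma}}{d\lambda}=\psi''(\lambda)=\text{Var}_{\hat{Q}_\lambda}(\ell)>0$, so $\lambda$ is a strictly increasing function of $\hat{\gamma}$. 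Moreover the admissible range \eqref{eq:threshcodsen} corresponds exactly to $\lambda\in[0,1]$, since $\psi'(0)=-D(\hat{P}_1\|\hat{P}_2)$ and $\psi'(1)=D(\hat{P}_2\|\hat{P}_1)$.

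Finally I would conclude by tracking signs. Since $S_i=\sqrt{S_i^2}$ and the square root is increasing, $S_i$ is monotone in $\hat{\gamma}$ in the same direction as $S_i^2$, so it suffices to sign the $\lambda$-derivative of $S_i^2$ and multiply by the positive factor $\tfrac{d\lambda}{d\hat{\gamma}}$. Computing $\tfrac{d}{d\lambda}S_1^2=4e^{\psi(2\lambda)-2\psi(\lambda)}\bigl(\psi'(2\lambda)-\psi'(\lambda)\bigr)$ and $\tfrac{d}{d\lambda}S_2^2=4e^{\psi(2\lambda-1)-2\psi(\lambda)}\bigl(\psi'(2\lambda-1)-\psi'(\lambda)\bigr)$, I use that for $\lambda\in[0,1]$ one has $2\lambda\geq\lambda$ and $2\lambda-1\leq\lambda$; monotonicity of $\psi'$ then gives $\psi'(2\lambda)-\psi'(\lambda)\geq 0$ and $\psi'(2\lambda-1)-\psi'(\lambda)\leq 0$. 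Combined with $\tfrac{d\lambda}{d\hat{\gamma}}>0$, this yields $\tfrac{\partial}{\partial\hat{\gamma}}S_1\geq 0$ and $\tfrac{\partial}{\partial\hat{\gamma}}S_2\leq 0$.

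I expect the only genuine obstacles to be bookkeeping rather than conceptual: verifying carefully that $\hat{Q}_\lambda$ is the symmetric tilt at the center $R_i=0$, confirming the two CGF identities for $S_1^2$ and $S_2^2$ (the exponent $2\lambda-1$ for $S_2^2$ can be negative, which is harmless because $\psi$ is finite on all of $\R$ for finite $\mathcal{X}$), and handling the degenerate boundary where $S_i=0$, i.e. $\lambda=0$ or $\hat{P}_1=\hat{P}_2$, at which the derivative statement is read as a one-sided or limiting inequality.
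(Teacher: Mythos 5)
Your proposal is correct, and it takes a genuinely different route from the paper. The paper first proves that $\lambda$ is non-decreasing in $\hat{\gamma}$ indirectly: it establishes a separate lemma that $\gamma\mapsto\min_{\E_Q[X]\ge\gamma}D(Q\|P)$ is convex, invokes the envelope theorem to identify $\lambda^*=\partial\hat{E}_1/\partial\hat{\gamma}$, and concludes $\partial\lambda^*/\partial\hat{\gamma}\ge 0$ from convexity; it then differentiates $\mathrm{Var}_{\hat{P}_1}(\hat{Q}_\lambda/\hat{P}_1)$ term by term in $\lambda$ and signs the result with the log-sum inequality plus Jensen. You instead route everything through the cumulant generating function $\psi(s)=\log\sum_x\hat{P}_1^{1-s}(x)\hat{P}_2^{s}(x)$: the closed forms $S_1^2=2(e^{\psi(2\lambda)-2\psi(\lambda)}-1)$ and $S_2^2=2(e^{\psi(2\lambda-1)-2\psi(\lambda)}-1)$ are correct (I checked the chi-squared computations), the identity $\hat{\gamma}=\psi'(\lambda)$ replaces the convexity-plus-envelope argument with $d\hat{\gamma}/d\lambda=\psi''(\lambda)>0$, and both sign determinations collapse to monotonicity of $\psi'$ via $2\lambda\ge\lambda$ and $2\lambda-1\le\lambda$ on $\lambda\in[0,1]$. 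What your approach buys: it is more self-contained (no auxiliary convexity lemma, no envelope theorem), it treats $S_1$ and $S_2$ by one uniform mechanism rather than deferring $S_2$ to ``similar steps,'' and the covariance positivity that the paper extracts from log-sum/Jensen is absorbed into the single fact that $\psi'$ is increasing. What the paper's approach buys is that its convexity lemma and envelope-theorem identity are reused elsewhere in the argument, so it comes almost for free in context. Your flagged caveats (the boundary $\lambda=0$ where $S_1=0$ and the square root is not differentiable, and the degenerate case $\hat{P}_1=\hat{P}_2$) are real but equally present and equally unaddressed in the paper's own proof, so they do not count against you.
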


This lemma shows that $S_1(\hat{P}_1,\hat{P}_2,\hat{\gamma})$  is a non-decreasing  function of $\hat{\gamma}$, i.e., as $\hat{\gamma}$ increases from $-D(\hat{P}_1\|\hat{P}_2) $ to $D(\hat{P}_2\|\hat{P}_1)$, the worst-case exponent $\hat E_1^L(R_1)$ becomes more sensitive to mismatch with  likelihood ratio testing. Conversely, $S_2(\hat{P}_1,\hat{P}_2,\hat{\gamma})$  is a non-increasing  function of $\hat{\gamma}$, i.e., as $\hat{\gamma}$ increases from $-D(\hat{P}_1\|\hat{P}_2) $ to $D(\hat{P}_2\|\hat{P}_1)$, the worst-case exponent $\hat E_2^L(R_2)$ becomes less sensitive (more robust) to mismatch with  likelihood ratio testing. Moreover, when $\lambda=\frac{1}{2}$, we have 	
		\begin{equation}
		\hat{Q}_{\frac{1}{2}}(x)=\frac{\sqrt{\hat{P}_1(x) \hat{P}_2(x) }}{ \sum_{a \in \mathcal{X} } \sqrt{\hat{P}_1(a) \hat{P}_2(a) }    },
		\end{equation}
		and then $S_1(\hat{P}_1,\hat{P}_2,\hat{\gamma})=S_2(\hat{P}_1,\hat{P}_2,\hat{\gamma})$. In addition,  $\hat{Q}_{\frac{1}{2}}$ minimizes ${E}_1(\hat \phi_{\hat \gamma}) + {E}_2(\hat \phi_{\hat \gamma})$  yielding \cite{Veeravalli}
		\begin{align}
		{E}_1(\hat \phi_{\hat \gamma}) + {E}_2(\hat \phi_{\hat \gamma})&= \min_{Q \in \mathcal{P}(\mathcal{X})} D(Q\|\hat{P}_1) +  D(Q\|\hat{P}_2)\\& = 2 B(\hat{P}_1,\hat{P}_2) 
				\end{align}
where $ B(\hat{P}_1,\hat{P}_2)$ is the Bhattacharyya distance between the mismatched distributions $\hat{P}_1$ and $\hat{P}_2$. This suggests that having equal sensitivity (or robustness) for both hypotheses minimizes the sum of the exponents.


\begin{example}
When $\gamma=0$ the likelihood ratio test becomes the maximum-likelihood test, which is known to achieve the lowest average probability of error in the Bayes setting for equal priors. For fixed priors $\pi_1,\pi_2$, the error probability in the Bayes setting is
$\bar\epsilon= \pi_1\epsilon_1 +\pi_2\epsilon_2$,
 resulting in the following error exponent \cite{Cover}
\begin{equation}
\bar E= \lim_{n\rightarrow \infty} \frac{1}{n} \log \bar \epsilon = \min \{E_1,E_2\}.
\end{equation}
Consider $\hat{P}_1 =\text{Bern}(0.1)$ ,    $\hat{P}_2 =\text{Bern}(0.8)$. Also, assume $R_1=R_2=R$. Figure \ref{fig:worstRsen} shows the worst-case error exponent in the Bayes setting given by $\min \{\hat E_1^L,\hat E_2^L\}$ by solving \eqref{eq:MMlower1} and \eqref{eq:MMlower2} as well as  $\min \{\tilde E_1^L,\tilde E_2^L\}$ 
using the approximation in   \eqref{eq:worstapprox}. We can see that the approximation is good for small $R$. Moreover, it can be seen that error exponents are very sensitive to mismatch for small $R$, i.e., the slope of the worst-case exponent goes to infinity as $R$ approaches to zero. 

\begin{figure}[!h]
	\centering
	\includegraphics[width=.5\textwidth]{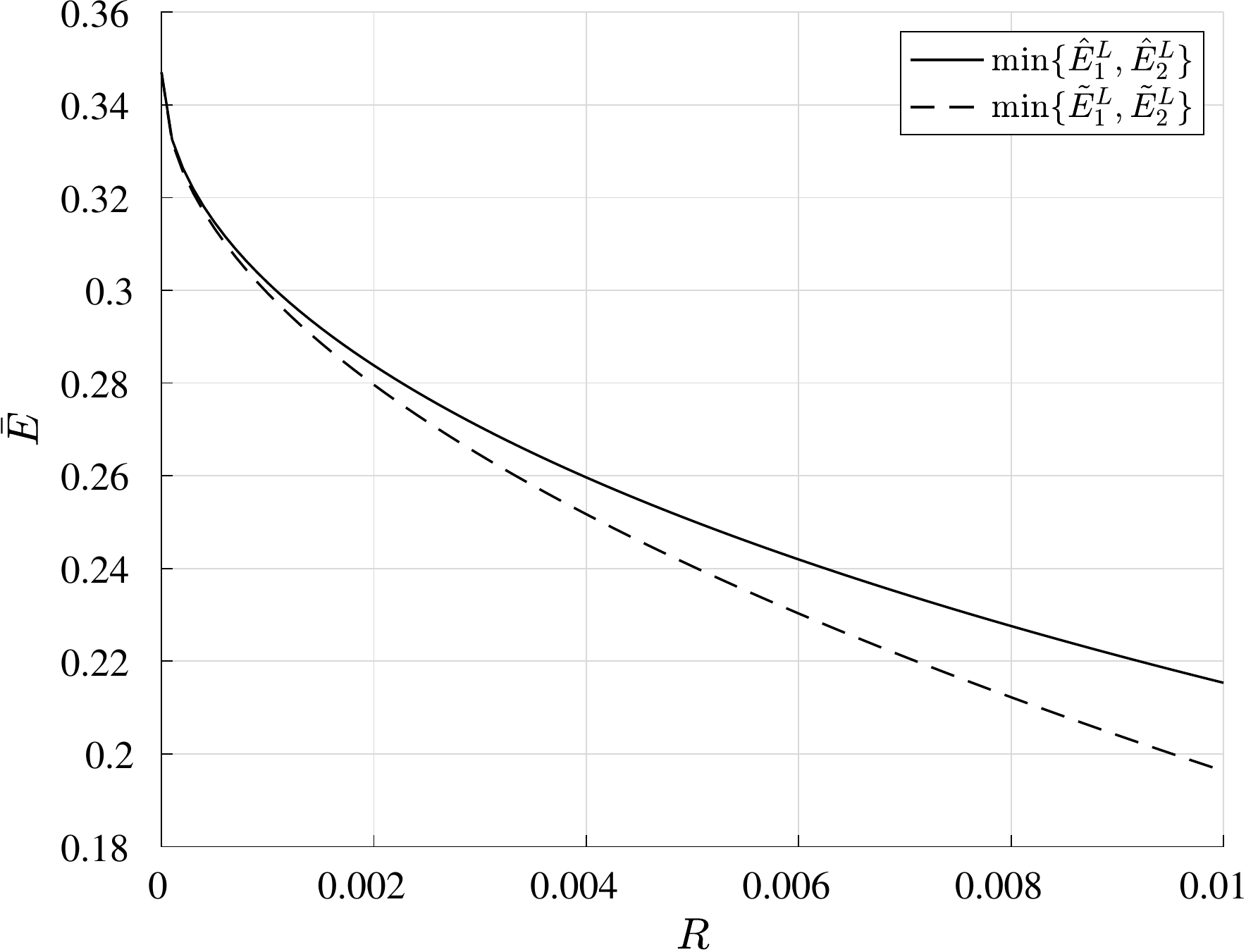}
	\caption{Worst-case achievable Bayes error exponent.} 
	\label{fig:worstRsen}
\end{figure}

\end{example}

\section*{Appendix}

	\subsection{Proof of Theorem \ref{thm:mismatchLRT}}
		We show the result  for $	\hat{E}_1(\hat{\phi}_{\hat{\gamma}})$ and similar steps are valid for $\hat{E}_2(\hat{\phi}_{\hat{\gamma}})$. The type-\RNum{1} probability of error can be written as 
		\begin{align}	
		\hat{\epsilon}_1(\hat{\phi}_{\hat{\gamma}})  = \sum_{\substack{\bx\in\Xc^n\\D(\Th\|\hat{P}_1)-D(\Th\|\hat{P}_2)  \geq \hat{\gamma}}} P_{1}^n(\xv).
		\label{eq:tailmismatch}
		\end{align}
		Applying Sanov's Theorem  to \eqref{eq:tailmismatch} to get  \eqref{eq:LRTmis1} is immediate. The optimization problem in  \eqref{eq:LRTmis1}  consists of the minimization of a convex function over linear constraints. Therefore, the KKT conditions are also sufficient \cite{Boyd}. Writing the Lagrangian, we have
		\begin{align}\label{eq:lagrangeMM}
		L(Q,\lambda,\nu)= &D(Q\|P_1) + \lambda \big ( D(Q\|\hat{P}_2)-D(Q\|\hat{P}_1) +\hat{\gamma} \big ) \nonumber \\
		&+\nu \Big ( \sum_{x\in \mathcal{X}} Q(x)-1 \Big).
		\end{align}
		Differentiating with respect to $Q(x)$ and setting to zero we have
		\begin{equation}\label{eq:lagrangeder}
		1+\log \frac{Q(x)}{P_1(x)} +\lambda \log \frac{\hat{P}_1(x)}{\hat{P}_2(x)} + \nu=0.
		\end{equation} 
		Solving equations \eqref{eq:lagrangeder} for every $x\in\Xc$ we obtain \eqref{eq:tiltedMM1}. Moreover, from the complementary slackness condition if \cite{Boyd}
		\begin{equation}\label{eq:threshcondition}
		D(P_1\|\hat{P}_1)- D(P_1\|\hat{P}_2) \leq \hat{\gamma},
		\end{equation}
		then \eqref{eq:KKTgamma1} should hold. Otherwise, if \eqref{eq:threshcondition} does not hold then $\lambda$ in \eqref{eq:lagrangeder} should be zero and hence $\hat{Q}_{\lambda_1}=P_1$, $\hat{E}_1(\hat{\phi}_{\hat{\gamma}})=0$.  Finally, substituting the minimizing distribution $\hat{Q}_{\lambda_1}$ \eqref{eq:tiltedMM1} into \eqref{eq:lagrangeMM} we get the dual expression
		\begin{equation}\label{eq:lagrange}
		g(\lambda)= \lambda \hat{\gamma} - \log \Big (  \sum_{x\in \mathcal{X}} P_1 \hat{P}_1^{-\lambda}(x) P_2^{\lambda}(x) \Big ).  
		\end{equation}
		Since the optimization problem in \eqref{eq:LRTmis1} is convex, then the duality gap is zero \cite{Boyd}, and this proves the \eqref{eq:dual}. 
%
		
	\subsection{Proof of Theorem \ref{thm:stein}}
     
     First, notice that  $\hat{E}_2(\hat{\phi}_{\hat{\gamma}})$ is a non-increasing function of $\hat{\gamma}$  since for every $\hat{\gamma}_1 \leq \hat{\gamma}_2 $  we have
     \begin{equation}
     \mathcal{\hat{Q}}_2({\hat{\gamma}_1}) \subset   \mathcal{\hat{Q}}_2({\hat{\gamma}_2}),
     \end{equation}
     hence 
     \begin{equation}
     \hat{E}_2(\hat{\phi}_{\hat{\gamma}_2}) \leq \hat{E}_2(\hat{\phi}_{\hat{\gamma}_1}).
     \end{equation}
      Therefore, in the Stein's regime we are looking for the smallest threshold such that $\limsup_{n\rightarrow \infty}  \hat{\epsilon}_1 (\hat{\phi}_{\hat{\gamma}}) \leq \epsilon$. Let
    \begin{equation}\label{eq:steinthresh}
    \hat{\gamma}= D(P_1\|\hat{P}_1) - D(P_1\|\hat{P}_2) -   \sqrt {\frac{ V(P_1, \hat{P}_1,\hat{P}_2) } {n} }    \Phi^{-1}(\epsilon),
    \end{equation}	
    where 
    \begin{align}
  &V(P_1, \hat{P}_1,\hat{P}_2)= \text{Var}_{P_1}\bigg(   \log \frac{\hat{P}_1}{\hat{P}_2} \bigg ) \nonumber\\
 & = \sum_{x\in \mathcal{X}} P_1(x)  \bigg ( \log \frac{\hat{P}_1}{\hat{P}_2} \bigg )^2 - 
   \big ( D(P_1\|\hat{P}_2) - D(P_1\|\hat{P}_1)   \big)^2,
    \end{align}
    and $\Phi^{-1}(\epsilon)$ is the inverse cumulative distribution function of a zero-mean unit-variance Guassian random variable. For such $\hat{\gamma}$, the type-\RNum{1}  error probability of the mismatched likelihood ratio test is
     \begin{align}	
    \hat{\epsilon}_1(\hat{\phi}_{\hat{\gamma}})=&\Pp_1 \Bigg [\frac{1}{n} \sum_{i=1}^{n} \log \frac{\hat{P}_1(X_i)}{\hat{P}_2(X_i)}  \leq 
     D(P_1\|\hat{P}_2) - D(P_1\|\hat{P}_1) \nonumber\\
      &+   \sqrt {\frac{ V(P_1, \hat{P}_1,\hat{P}_2) } {n} }    \Phi^{-1}(\epsilon) \Bigg ].
    \end{align}
    Observe that $D(P_1\|\hat{P}_2) - D(P_1\|\hat{P}_1) =  \mathbb{E}_{P_1} \Big [ \log \frac{\hat{P}_1(X)}{\hat{P}_2(X)} \Big ]$. Let $  \hat{S}_n=\frac{1}{n}\sum_{i=1}^n \hat\imath(x_i)$, where $\hat\imath(x_i)=\log \frac{\hat{P}_1(x_i)}{\hat{P}_2(x_i)}$.  Letting $Z$ be a zero-mean unit-variance Guassian random variable, then, by the central limit theorem we have
    \begin{align}
    &\limsup_{n\rightarrow \infty}  \hat{\epsilon}_1 (\hat{\phi}_{\hat{\gamma}})  &\notag\\
    &= \limsup_{n\rightarrow \infty} \Pp_1\Bigg [ \frac{ \sqrt{n} \big ( \hat{S}_n- \mathbb{E}_{P_1} [\hat\imath(X)] \big )}{\sqrt {V(P_1, \hat{P}_1,\hat{P}_2) }  }   \leq    \Phi^{-1}(\epsilon)     \Bigg]\\
   &=\Pp\big [Z \leq \Phi^{-1}(\epsilon) \big]\\
   &= \epsilon. 
    \end{align} 
    Therefore, asymptotically, the type-\RNum{1} error probability of mismatched likelihood ratio test with $\hat{\gamma}$ in \eqref{eq:steinthresh} is equal to $\epsilon$. 
    
    Next, we need to show that for any threshold $\hat{\gamma}$ and $\varepsilon>0$ such that
    \begin{equation}\label{eq:limsupthresh}
       \limsup_{n \rightarrow \infty} \hat{\gamma} +\varepsilon\leq D(P_1\|\hat{P}_1) - D(P_1\|\hat{P}_2),
    \end{equation}
    the type-\RNum{1} probability of error tends to $1$ as the number of observation approaches infinity, which implies that $D(P_1\|\hat{P}_1) - D(P_1\|\hat{P}_2)$ is the lowest possible threshold that meets the constraint $\limsup_{n\rightarrow \infty}  \hat{\epsilon}_1 (\hat{\phi}_{\hat{\gamma}}) \leq \epsilon$. The corresponding $\hat E_2(\hat\phi_{\hat\gamma})$ is this highest type-\RNum{2} exponent that meets the constraint.
    In order to show this, define the following sets 
    \begin{align}
    \mathcal{E}_{\delta} = \Big \{ \bx\in\Xc^n:\, &\| \Th(x) -P_1(x)\|_\infty   < \delta \Big     \},\\   
    \Dc = \big\{  \bx\in\Xc^n:\, & \big|D(\Th\|\hat{P}_1)-D(\Th\|\hat{P}_2)\\  
    -&D(P_1\|\hat{P}_1)+D(P_1\|\hat{P}_2) \big | <  \varepsilon \big\}, \nonumber\\
    \bar \Dc = \big\{  \bx\in\Xc: \,&D(\Th\|\hat{P}_1)-D(\Th\|\hat{P}_2) \nonumber\\
    -& D(P_1\|\hat{P}_1)+D(P_1\|\hat{P}_2) \geq - \varepsilon \big\}.
    \end{align}
    where $\|.\|_{\infty}$ is the norm infinity. From the continouity of $D(.\|\hat{P})$ we have that for any $\varepsilon >0$ such that
    \begin{equation}\label{eq:epsilondelta}
    \big |D(\Th\|\hat{P}_1)-D(\Th\|\hat{P}_2) - D(P_1\|\hat{P}_1)+D(P_1\|\hat{P}_2) \big  | < \varepsilon. 
    \end{equation} 
    there exists $\delta>0$ such that for all $\Th$ satisfying
        \begin{equation}
    \| \Th(x) -P_1(x)\|_\infty   < \delta 
    \end{equation}
     \eqref{eq:epsilondelta} holds. Therefore, when  \eqref{eq:limsupthresh} holds 
\begin{align}
      \liminf_{n\rightarrow \infty} \epsilon_1 (\hat{\phi}_{\hat{\gamma}} ) \geq&  \liminf_{n\rightarrow \infty}  \sum_{x\in\bar \Dc} P_1^n(\xv)\\
     \geq  &\liminf_{n\rightarrow \infty}   \sum_{x\in \Dc} P_1^n(\xv).   
       \end{align}
 Now from the continuity argument, there exists a $\delta$ such that 
 \begin{equation}
  \sum_{x\in\Dc} P_1^n(\xv) \geq   \sum_{x\in\Ec_{\delta}} P_1^n(\xv).
 \end{equation}
Set $\delta_n=\sqrt{\frac{\log n}{n}}$. Thus, for sufficiently large $n$, $\delta_n \leq \delta$, Therefore, we have 
     \begin{align}
       \liminf_{n\rightarrow \infty} \epsilon_1 (\hat{\phi}_{\hat{\gamma}} )  &\geq \liminf_{n \rightarrow \infty} \sum_{\xv\in\Ec_{\delta_n}} P_1^n(\xv)\\
  %
&\geq \lim_{n \rightarrow \infty} 1- \frac{2|\mathcal{X}|}{n}\\
&=1.
     \end{align}
where the last step is by Hoeffding's inequality \cite{Hoeffdingineq} and union bound. Therefore, for any $\hat{\gamma} < D(P_1\|\hat{P}_1) - D(P_1\|\hat{P}_2)$ type-\RNum{1} error goes to unity which concludes the theorem.

	
	\subsection{Proof of Theorem \ref{thm:worstmismatch}}
	
		We show the result  under the first hypothesis and similar steps are valid under the second hypothesis. For every $P_1$ the achievable type-\RNum{1} is error exponent  	$\hat{E}_1(\hat{\phi}_{\hat{\gamma}})$  does not depend on $P_2$ therefore, \eqref{eq:MMlower1} is a lower bound to $\hat{E}_1(\hat{\phi}_{\hat{\gamma}})$. Moreover, since the relative entropy is jointly convex, then \eqref{eq:MMlower1} is a convex optimization problem and the KKT conditions are also sufficient.  Writing the Lagrangian we have
		\begin{align}\label{eq:lagrange}
		L(Q,P_1,\lambda_1,\lambda_1', \nu_1, \nu_1')= D(Q\|P_1) + \lambda_1 \big( D(Q\|\hat{P}_2) \nonumber  \\
		-D(Q\|\hat{P}_1) +\hat{\gamma} \big) +  \lambda_1' \big ( D(\hat{P}_1\|P_1) -R_1\big ) \nonumber \\
		+ \nu_1 \Big ( \sum_{x\in \mathcal{X}} Q(x)-1\Big )+ \nu_1' \Big ( \sum_{x\in \mathcal{X}} P_1(x)-1\Big ).
		\end{align}
		Differentiating with respect to $Q(x)$ and  $P_1(x)$ and setting the derivatives to zero we have
		\begin{align}
		1+\log \frac{Q(x)}{P_1(x)} +\lambda_1 \log \frac{\hat{P}_1(x)}{\hat{P}_2(x)} + \nu_1&=0,\label{eq:lagrange1}\\
		-\frac{Q(x)}{P_1(x)}-\lambda_1' \frac{\hat{P}_1(x)}{P_1(x)}+\nu_1'&=0, \label{eq:lagrange2}
		\end{align} 
		respectively. Solving equations \eqref{eq:lagrange1}, \eqref{eq:lagrange2} for every $x\in\Xc$ and letting $\beta_1=\frac{1}{1+\lambda_1'}$  we obtain \eqref{eq:lowerworstKKT1} and \eqref{eq:lowerworstKKT11}. Moreover, from the complementary slackness condition \cite{Boyd} if  for all  $P_1$ in $\mathcal{B}(\hat{P}_1,R_1)$ the condition $D(P_1\|\hat{P}_1)- D(P_1\|\hat{P}_2) \leq \hat{\gamma}$ stated in Theorem \ref{thm:mismatchLRT} holds, then \eqref{eq:condballMM} and \eqref{eq:condballMM2} should hold. Otherwise,  if there exists a  $P^{L}_1$  in $\mathcal{B}(\hat P_1,R_1) $ such that $D(P_1^L\|\hat{P}_1)- D(P_1^L\|\hat{P}_2) \leq \hat{\gamma}$, then for this distribution $\hat{E}_1(\hat{\phi}_{\hat{\gamma}})=0$. Therefore, if conditon \eqref{eq:gammaworsL} holds for all $P_1$ in the $D$-ball  $\hat{E}^L_1(R_1) >0$, otherewise $\hat{E}^L_1(R_1) =0$.

		\subsection{Proof of Theorem \ref{thm:lowerworst}}
		
			We show the result under the first hypothesis, and similar steps are valid for the second hypothesis.  Consider the first minimization in \eqref{eq:MMlower1} over $Q$, i.e.,
			\begin{equation}\label{eq:LRTmis1perturb}
			\hat{E}_1(\hat{\phi}_{\hat{\gamma}})= \min_{ Q \in \mathcal{\hat{Q}}_1 (\hat{\gamma})  } D(Q\|P_1).
			\end{equation}
			First, note that by assumption, $\hat P_1(x)>0$ for each $x\in\Xc$. Therefore, for any finite $R_1$, we have $P_1(x)>0$ for every  $P_1 \in \mathcal{B}(\hat P_1,R_1)$. Hence, for $P_1 \in \mathcal{B}(\hat P_1,R_1)$, the relative entropy $D(Q\|P_1)$ is continuous in both $Q, P_1$.  Moreover, the constraints in  \eqref{eq:LRTmis1perturb} are continuous with respect to $Q$ and also trivially with respect to $P_1$, since the constraints do not depend on $P_1$. Hence, the optimization in \eqref{eq:LRTmis1perturb} is minimizing a continuous function  over a compact set with continuous constraints.  Hence, by the maximum theorem \cite{Walker},
			$\hat{E}_1(\hat{\phi}_{\hat{\gamma}})$ is  a continuous function of $P_1$  for all $P_1 \in \mathcal{B}(\hat P_1,R_1)$ with finite radius $R_1$. Therefore, by the envelope theorem\cite{Segal} we have
			\begin{equation}
			\frac{ \partial \hat{E}_1(\hat{\phi}_{\hat{\gamma}}) }{\partial P_1(x)}= -\frac{\hat{Q}_{\lambda}(x)}{P_1(x)}.  
			\end{equation} 
			Define the vectors
			\begin{align}
			\nabla \hat{E}_1 &= \bigg( -\frac{\hat{Q}_{\lambda}(x_1)}{\hat{P}_1(x_1)},\dotsc, -\frac{\hat{Q}_{\lambda}(x_{|\Xc|})}{\hat{P}_1(x_{|\Xc|})}\bigg)^T\\
			\thetav_{P_1}		 &= \big(P_1(x_1)-\hat{P}_1(x_1),\dotsc,P_1(x_{|\Xc|})-\hat{P}_1(x_{|\Xc|})\big)^T.
			\end{align}			
            Assuming the $\hat{E}^L_i(R_i)$ to be continuous we can apply the Taylor expansion to $\hat{E}_1(\hat{\phi}_{\hat{\gamma}})$ around $P_1=\hat{P}_1$ and we obtain
			\begin{align}\label{eq:linearapprox}
			\hat{E}_1(\hat{\phi}_{\hat{\gamma}})=E_1(\hat{\phi}_{\hat{\gamma}})  +     \thetav_{P_1}^{T}   \nabla \hat{E}_1     + o(\| \thetav_{P_1} \|_{\infty}).
			\end{align}
			By substituting the expansion \eqref{eq:linearapprox}   for  the first minimization in  \eqref{eq:MMlower1} we obtain
			\begin{equation}\label{eq:approxworst}
			\hat{E}^{L}_1(R_1) = \min_{P_1 \in \mathcal{B}(\hat P_1,R_1)   }  E_1(\hat{\phi}_{\hat{\gamma}})  +     \thetav_{P_1}^{T}   \nabla\hat{E}_1     + o(\|\thetav_{P_1} \|_{\infty}).
			\end{equation}
			
			Now, we further approximate the outer minimization constraint in \eqref{eq:MMlower1}. By approximating $D(\hat{P}_1 \| P_1 )$ we get \cite{Zheng}
			\begin{equation}\label{eq:KLapprox}
			D(\hat{P}_1 \| P_1 ) = \frac{1}{2}  \thetav_{P_1}^T \Jm(\hat{P}_1) \thetav_{P_1} + o (\| \thetav_{P_1} \|^2_{\infty}),
			\end{equation}
			where 
			\begin{equation}
			\Jm(\hat{P}_1)=\diag\bigg( \frac{1}{\hat{P}_1(x_1)},\dotsc,\frac{1}{\hat{P}_1(x_{|\Xc|})}\bigg)
			\end{equation}
			is the Fisher information matrix. Therefore,   \eqref{eq:approxworst} can be approximated as
			\begin{align}\label{eq:worstapproxopt}
			\hat{E}^{L}_1(R_1) &\approx   \tilde{E}^L_1 (R_1) \nonumber \\
			&\triangleq \min_{\substack{ \frac{1}{2} \thetav_{P_1}^T \Jm(\hat{P}_1) \thetav_{P_1}  \leq R_{1} \\ \onev^T\thetav_{P_1}=0}}  E_1(\hat{\phi}_{\hat{\gamma}})  +     \thetav_{P_1}^{T}   \nabla \hat{E}_1.    
			\end{align}	
			The optimization problem in \eqref{eq:worstapproxopt} is  convex and hence the KKT conditions are sufficient.  The corresponding Lagrangian is given by
			\begin{align}
			L(\thetav_{P_1}, \lambda,\nu) &= E_1(\hat{\phi}_{\hat{\gamma}})  +     \thetav_{P_1}^{T}   \nabla \hat{E}_1  \nonumber \\
			&+ \lambda \Big (\frac{1}{2}\thetav_{P_1}^T \Jm(\hat{P}_1) \thetav_{P_1}  - R_{1} \Big) +\nu ( \onev^T\thetav_{P_1} ).
			\end{align}
			Differentiating with respect to $\thetav_{P_1}$ and setting to zero, we have
			\begin{equation}\label{eq:KKTsen}
			\nabla \hat{E}_1 + \lambda \Jm(\hat{P}_1)\thetav_{P_1} +\nu \onev=0.
			\end{equation}
			Therefore,
			\begin{equation}\label{eq:deltaPsolution}
			\thetav_{P_1} =-\frac{1}{\lambda}   \Jm^{-1}(\hat{P}_1) \big (\nabla \hat{E}_1   +\nu \onev \big ).
			\end{equation}
			Note that if $\lambda=0$ then from \eqref{eq:KKTsen}    $  \nabla \hat{E}_1= -\nu \onev$ which cannot be true for thresholds satisfying \eqref{eq:threshcodsen} since $\hat{Q}_{\lambda} \neq \hat{P}_1$. Therefore, from the complementary slackness condition \cite{Boyd} the inequality constraint \eqref{eq:worstapproxopt} should be satisfied with equality.  By solving $\frac{1}{2}\thetav_{P_1}^T \Jm(\hat{P}_1) \thetav_{P_1} = R_{1}$ and  $\bold{1}^T\thetav_{P_1} =0 $  and substituting  $\lambda, \nu$ in \eqref{eq:deltaPsolution}, we obtain 
			\begin{equation}\label{eq:deltaP}
			\thetav_{P_1} =-\frac{ \psiv}{\sqrt{\psiv^T \Jm(\hat{P}_1)\psiv} }\sqrt{2R_1},
			\end{equation}
			where
			\begin{equation}	
			\psiv=	\Jm^{-1}(\hat{P}_1)\Bigg  (\nabla \hat{E}_1 -{\onev^T\Jm^{-1}(\hat{P}_1) \nabla \hat{E}_1  } \onev\Bigg ).
			\end{equation}
			 Substituding \eqref{eq:deltaP} into  \eqref{eq:approxworst} yields \eqref{eq:worstapprox}.

		\subsection{Proof of Lemma \ref{lem:varderivative}}
		
		We show the result  under the first hypothesis and similar steps are valid under the second hypothesis. To prove the Theorem we need the following lemma. 
					\begin{lemma}\label{lem:convex}
					Consider the following optimization problem
					\begin{equation}
					E(\gamma)= \min_{ \mathbb{E}_Q [X] \geq \gamma } D(Q\|P).
					\end{equation} 
			Then	$E(\gamma)$ is convex in $\gamma$.
				\end{lemma}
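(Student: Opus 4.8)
The plan is to prove convexity directly from the definition, exploiting two structural facts: the feasible region transforms compatibly with convex combinations through the \emph{linearity} of the constraint, and relative entropy $D(\cdot\|P)$ is convex in its first argument. First I would fix two thresholds $\gamma_1,\gamma_2$ in the range where the problem is feasible (i.e.\ $\gamma_i\le\max_x X(x)$, so that $E(\gamma_i)<\infty$). Since $D(\cdot\|P)$ is lower semicontinuous and the feasible set $\{Q:\mathbb{E}_Q[X]\ge\gamma_i\}$ is a compact subset of the simplex (the intersection of $\mathcal{P}(\mathcal{X})$ with a closed half-space), the minimum is attained; I would select minimizers $Q_1,Q_2$ achieving $E(\gamma_1),E(\gamma_2)$ respectively.

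Next, for $\theta\in[0,1]$ I would set $Q_\theta=\theta Q_1+(1-\theta)Q_2$ and $\gamma_\theta=\theta\gamma_1+(1-\theta)\gamma_2$. The key observation is that linearity of expectation gives $\mathbb{E}_{Q_\theta}[X]=\theta\,\mathbb{E}_{Q_1}[X]+(1-\theta)\,\mathbb{E}_{Q_2}[X]\ge\theta\gamma_1+(1-\theta)\gamma_2=\gamma_\theta$, so $Q_\theta$ is feasible for the problem with threshold $\gamma_\theta$. Hence $E(\gamma_\theta)\le D(Q_\theta\|P)$ simply because $E(\gamma_\theta)$ is the infimum over a set that contains $Q_\theta$.

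Finally, I would invoke the convexity of relative entropy in its first argument to bound $D(Q_\theta\|P)=D\big(\theta Q_1+(1-\theta)Q_2\,\big\|\,P\big)\le\theta D(Q_1\|P)+(1-\theta)D(Q_2\|P)=\theta E(\gamma_1)+(1-\theta)E(\gamma_2)$. Chaining the two inequalities yields $E\big(\theta\gamma_1+(1-\theta)\gamma_2\big)\le\theta E(\gamma_1)+(1-\theta)E(\gamma_2)$, which is exactly convexity of $E$.

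There is no substantial obstacle: the argument is a standard convex-combination/perturbation proof, and the two mild technical points are routine. The first is existence of the minimizers, which follows from compactness of the feasible region and lower semicontinuity of $D(\cdot\|P)$; the second is restricting attention to the feasibility range of $\gamma$ so that $E$ is finite. On thresholds where the constraint is infeasible one sets $E(\gamma)=+\infty$, in which case the desired inequality holds vacuously, so convexity is preserved on the whole extended real line.
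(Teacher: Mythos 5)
Your proof is correct and follows essentially the same route as the paper's: pick minimizers for $\gamma_1,\gamma_2$, note their convex combination is feasible for the interpolated threshold by linearity of expectation, and apply convexity of $D(\cdot\|P)$ in its first argument. The added remarks on existence of minimizers and the infeasible case are fine but do not change the argument.
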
		
				
				\begin{proof}
					Let 
					\begin{equation}
					Q^{*}_{1} = \argmin_{ \mathbb{E}_Q [X] \geq \gamma_1} D(Q\|P) ~~ Q^{*}_{2} = \argmin_{ \mathbb{E}_Q [X] \geq \gamma_2} D(Q\|P).
					\end{equation}
					From the  convexity of the relative entropy, 	for any   $\alpha \in (0,1)$,

					\begin{align}
					D&\big(\alpha Q^*_1 +(1-\alpha) Q^*_2 \| P \big) \leq \alpha  D( Q^*_1  \| P) +(1-\alpha)  D( Q^*_2 \| P)\\
					&= \alpha  \min_{ \mathbb{E}_Q [X] \geq \gamma_1 } D(Q\|P) +(1-\alpha)  \min_{ \mathbb{E}_Q [X] \geq \gamma_2 } D(Q\|P).
					\end{align}			
					Furthermore,  since $Q^*_1, Q^*_2$  satisfy their correspending optimization constraints, then $\mathbb{E}_{Q^*_1}[X] \geq   \gamma_1$, $\mathbb{E}_{Q^*_2}[X]  \geq  \gamma_2$ , hence

					\begin{equation}
					\mathbb{E}_{\alpha Q^*_1 +(1-\alpha) Q^*_2}[X]  \geq \alpha \gamma_1+ (1-\alpha) \gamma_2.
					\end{equation}
					Therefore,  $\alpha Q^*_1 +(1-\alpha) Q^*_2$ satisfies the optimization constraint when $\gamma= \alpha \gamma_1 + (1-\alpha) \gamma_2$, then

					\begin{align}
					&\min_{	\mathbb{E}_{Q} [X] \leq  \alpha \gamma_1+(1-\alpha) \gamma_2} D(Q\|P) \leq D(\alpha Q^*_1 +(1-\alpha) Q^*_2 \| P)\\
					&\leq \alpha  \min_{ \mathbb{E}_Q [X] \geq \gamma_1 } D(Q\|P) +(1-\alpha)  \min_{ \mathbb{E}_Q [X] \geq \gamma_2 } D(Q\|P).
					\end{align}	 
					Hence $E(\gamma)$ is  convex in $\gamma$. 
					
				\end{proof}
				From above lemma we can show that $\lambda$  is a non-decreasing function of $\hat{\gamma}$. From the envelope theorem \cite{Segal}
     \begin{equation}
   \frac{\partial 	\hat{E}_1(\hat{\phi}_{\hat{\gamma}})  }{\partial \hat{\gamma}} = \lambda^*,
     \end{equation}
     where $\lambda^*$ is the optimizing $\lambda$ in \eqref{eq:tilted} for the test $\hat{\phi}_{\hat{\gamma}}$. Therefore
          \begin{equation}
   \frac{\partial  \lambda^* }{\partial \hat{\gamma}}=  \frac{\partial ^2	\hat{E}_1(\hat{\phi}_{\hat{\gamma}})  }{\partial \hat{\gamma}^2} \geq 0,
     \end{equation}
     where the inequality is from convexity of 	$\hat{E}_1(\hat{\phi}_{\hat{\gamma}})$ respect to $\hat{\gamma}$. Therefore, we only need to consider the behavior of variance as $\lambda$ changes. Taking the derivative of variance respect to $\lambda$, we have
		\begin{align}
		\frac{\partial }{\partial \lambda}\text{Var}_{\hat{P}_1} \Big(\frac{\hat{Q}_{\lambda}(X)}{\hat{P}_1(X)}  \Big)&=\sum_{x\in \mathcal{X}}  \frac{2{\hat{Q}_{\lambda}}(x)}{\hat{P_1}(x)} \frac{\partial \hat{Q}_{\lambda}(x) }{\partial \lambda}\\
		&= \sum_{x\in \mathcal{X}}  \frac{2{\hat{Q}_{\lambda}}(x)}{\hat{P_1}(x)}  \Bigg( \hat{Q}_{\lambda}(x)  \log \frac{\hat{P}_2(x)}{\hat{P}_1(x)}-\nonumber \\
		&~~~\hat{Q}_{\lambda}(x)  \sum_{x' \in \mathcal{X}} \hat{Q}_{\lambda}(x')\log \frac{\hat{P}_2(x')}{\hat{P}_1(x')} \Bigg ) \\
		&=2 \mathbb{E}_{\hat{Q}_{\lambda}} \bigg[ \frac{\hat{Q}_{\lambda}(X)}{\hat{P}_1(X)} \log  \frac{\hat{P}_2(X)}{\hat{P}_1(X)} \bigg  ]\nonumber \\
		&~~~- 2 \mathbb{E}_{\hat{Q}_{\lambda}} \bigg [ \frac{\hat{Q}_{\lambda}(X)}{\hat{P}_1(X)}\bigg ] \mathbb{E}_{\hat{Q}_{\lambda}} \bigg [ \log  \frac{\hat{P}_2(X)}{\hat{P}_1(X)} \bigg ].
		\end{align}
		Substituting $\hat{Q}_{\lambda}(X)$ as a function of $\lambda$ we get
		\begin{align}
		&\frac{\sum_{a\in \mathcal{X}} \hat{P}_1^{1-\lambda}(a) \hat{P}_2^{\lambda}(a)  }{2} \frac{\partial }{\partial \lambda}\text{Var}_{\hat{P}_1} \bigg(\frac{\hat{Q}_{\lambda}(X)}{\hat{P}_1(X)}  \bigg)\nonumber \\
		 &=\mathbb{E}_{\hat{Q}_{\lambda}} \bigg [ \bigg (\frac{\hat{P}_{2}(X)}{\hat{P}_1(X)} \bigg  )^{\lambda} \log  \frac{\hat{P}_2(X)}{\hat{P}_1(X)} \bigg  ]  \nonumber \\
		&~~~~~-  \mathbb{E}_{\hat{Q}_{\lambda}} \bigg [ \bigg (\frac{\hat{P}_2(X)}{\hat{P}_1(X)} \bigg ) ^{\lambda} \bigg ] \mathbb{E}_{\hat{Q}_{\lambda}} \bigg [ \log  \frac{\hat{P}_2(X)}{\hat{P}_1(X)}  \bigg ]. \label{eq:varlogsum}
		\end{align}
		Let $r(X)= \Big (\frac{\hat{P}_2(X)}{\hat{P}_1(X)}\Big )^{\lambda}$, then
		\begin{align}
		\mathbb{E}_{\hat{Q}_{\lambda}} &\bigg [ \bigg (\frac{\hat{P}_{2}(X)}{\hat{P}_1(X)} \bigg  )^{\lambda} \log  \frac{\hat{P}_2(X)}{\hat{P}_1(X)} \bigg  ]\nonumber \\
		& ~~~~~~~~~~~~~~~~-\mathbb{E}_{\hat{Q}_{\lambda}} \bigg [ \bigg (\frac{\hat{P}_2(X)}{\hat{P}_1(X)} \bigg ) ^{\lambda} \bigg ] \mathbb{E}_{\hat{Q}_{\lambda}} \bigg [ \log  \frac{\hat{P}_2(X)}{\hat{P}_1(X)}  \bigg ] \nonumber \\
		&=\frac{1}{\lambda} \mathbb{E}_{\hat{Q}_{\lambda}} [ r(X)  \log r(X) ]- \frac{1}{\lambda}  \mathbb{E}_{\hat{Q}_{\lambda}} [ r(X) ] \mathbb{E}_{\hat{Q}_{\lambda}} [ \log r(X)  ]. \label{eq:varlogsum}
		\end{align} 
		Note that $\hat{Q}_{\lambda}(x),r(x)$ are positive for all $x\in \Xc$. Therefore, using the log-sum inequality \cite{Cover} for the first term and Jensen inequality \cite{Cover} for the second term in \eqref{eq:varlogsum}, we obtain
		\begin{align}
		&\frac{\lambda \sum_{a\in \mathcal{X}} \hat{P}_1^{1-\lambda}(a) \hat{P}_2^{\lambda}(a)  }{2}\frac{\partial }{\partial \lambda}\text{Var}_{\hat{P}_1} \Big(\frac{\hat{Q}_{\lambda}(X)}{\hat{P}_1(X)}  \Big) \nonumber\\
	&\geq	\mathbb{E}_{\hat{Q}_{\lambda}} [ r(X) ] \log \mathbb{E}_{\hat{Q}_{\lambda}} [  r(X)  ]-  \mathbb{E}_{\hat{Q}_{\lambda}} [ r(X) ] \mathbb{E}_{\hat{Q}_{\lambda}} [ \log r(X)  ]  \\
	&\geq		\mathbb{E}_{\hat{Q}_{\lambda}} [ r(X) ] \log \mathbb{E}_{\hat{Q}_{\lambda}} [  r(X)  ]-  \mathbb{E}_{\hat{Q}_{\lambda}} [ r(X) ] \log \mathbb{E}_{\hat{Q}_{\lambda}} [  r(X)  ]\\
	&=0.
		\end{align} 
		Also, the above inequalities are met with equality when both log-sum and Jensen's inequalities are met with equality, which happens when $\lambda=0$. Therefore, for $ \lambda>0$,  $\text{Var}_{\hat{P}_1} \Big(\frac{\hat{Q}_{\lambda}(X)}{\hat{P}_1(X)}  \Big)$ is an increasing function of $\lambda$ for $ \lambda>0$ and consequently 
		\begin{equation}
			\frac{\partial }{\partial \hat{\gamma}}S_1(\hat{P}_1,\hat{P}_2,\hat{\gamma}) \geq 0.
		\end{equation}

\bibliographystyle{ieeebib}
\bibliographystyle{ieeetr}
\bibliography{journal_abbr,izs-2020}

\end{document}